\documentclass[10pt, conference]{IEEEtran}
\usepackage[margin=1in,nohead,nofoot]{geometry}
\usepackage{subfigure}
\usepackage{epsfig,graphicx,psfrag}
\usepackage{amsfonts,amsmath,amssymb}

\newtheorem{definition}{Definition}
\newtheorem{theorem}{Theorem}[section]
\newtheorem{lemma}[theorem]{Lemma}

\begin{document}

\title{Communication Requirements for Generating Correlated Random Variables}

\author{
\authorblockN{Paul Cuff}
\authorblockA{Department of Electrical Engineering\\
Stanford University\\
E-mail: pcuff@stanford.edu }
}

\maketitle

\begin{abstract}
Two familiar notions of correlation are rediscovered as extreme operating points for simulating a discrete memoryless channel, in which a channel output is generated based only on a description of the channel input.  Wyner's ``common information'' coincides with the minimum description rate needed.  However, when common randomness independent of the input is available, the necessary description rate reduces to Shannon's mutual information.  This work characterizes the optimal tradeoff between the amount of common randomness used and the required rate of description.
\end{abstract}

\section{Introduction}
\label{section introduction}

What is the intrinsic connection between correlated random variables?  How much interaction is necessary to create correlation?

Many fruitful efforts have been made to quantify correlation between two random variables.  Each quantity is justified by the operational questions that it answers.  Covariance dictates the mean squared error in linear estimation.  Shannon's mutual information is the descriptive savings from side information in lossless source coding and the additional growth rate of wealth due to side information in investing.  G\'{a}cs and K\"{o}rner's common information \cite{Gacs} is the number of common random bits that can be extracted from correlated random variables.  It is less than mutual information.  Wyner's common information \cite{Wyner} is the number of common random bits needed to generate correlated random variables and is greater than mutual information.

This work provides a fresh look at two of these quantities --- mutual information and Wyner's common information (herein simply ``common information'').  Both are extreme points of the channel simulation problem, introduced as follows:  An observer ({\em encoder}) of an i.i.d. source $X_1,X_2,...$ describes the sequence to a distant random number generator ({\em decoder}) that produces $Y_1,Y_2,...$ (see Figure \ref{figure channel simulation}).  What is the minimum rate of description needed to achieve a joint distribution that is statistically indistinguishable (as measured by total variation) from the distribution induced by putting the source through a memoryless channel?

\begin{figure}
\psfrag{l1}[][][0.8]{$X^n$}
\psfrag{l2}[][][0.8]{$F_n$}
\psfrag{l3}[][][0.8]{$nR$ bits}
\psfrag{l4}[][][0.8]{$G_n$}
\psfrag{l5}[][][0.8]{$Y^n$}
\psfrag{l6}[][][1.0]{Channel Simulator:  $q(y|x)$}
\centering
\includegraphics[width=3.0in]{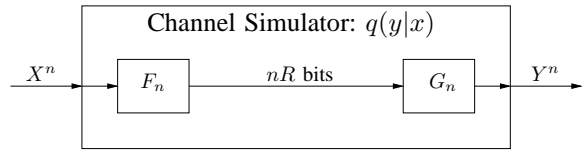}
\caption{A discrete-memoryless channel is simulated by two separate processors, $F$ and $G$.  The first processor, $F$, observes $X$ and the second processor, $G$, generates $Y$ after receiving a message at rate $R$ from $F$.  The minimum rate needed is the common entropy of $X$ and $Y$.}
\label{figure channel simulation}
\end{figure}

Channel simulation is a form of random number generation.  The variables $X^n$ come from an external source and $Y^n$ are generated to be correlated with $X^n$.  The channel simulation is successful if the total variation between the resulting distribution of $(X^n,Y^n)$ and the i.i.d. distribution that would result from passing $X^n$ through a memoryless channel is small.  This is a strong requirement.  It's stricter than the requirement that $(X^n,Y^n)$ be jointly typical as in the coordinated action work of Cover and Permuter \cite{Permuter}.  This total variation requirement means that any hypothesis test that a statistician comes up with to determine whether $X^n$ was passed through a real memoryless channel or the channel simulator will be virtually useless.

Wyner's result implies that in order to generate $X^n$ and $Y^n$ separately as an i.i.d. source pair they must share bits at a rate of at least the common information $C(X;Y)$ of the joint distribution.  In the channel simulation problem these shared bits come in the form of the description of $X^n$.\footnote{To achieve channel simulation with a rate as low as the common information one must change Wyner's relative entropy requirement in \cite{Wyner} to a total variation requirement as used in this work.}  However, the ``reverse Shannon theorem'' of Bennett and Shor \cite{Bennett} suggests that a description rate of the mutual information $I(X;Y)$ of the joint distribution is all that is needed to successfully simulate a channel.  How can we resolve this apparent contradiction?

The work of Bennett and Shor assumes that common random bits, or {\em common randomness}, independent of the source $X^n$ are available to the encoder and decoder.  In that setting, the common randomness provides a second connection between the source $X^n$ and output $Y^n$, in addition to the description of $X^n$.  Remarkably, even though it is independent from the source $X^n$, the common randomness assists in generating correlated random numbers and allows for description rates smaller than the common information $C(X;Y)$.

In this work, we characterize the tradeoff between the rate of available common randomness and the required description rate for simulating a discrete memoryless channel for a fixed input distribution, as in Figure \ref{figure channel simulation with common randomness}.  Indeed, the tradeoff region of Section \ref{section main result} confirms the two extreme cases.  If the encoder and decoder are provided with enough common randomness, sending $I(X;Y)$ bits per symbol suffices.  On the other hand, in the absence of common randomness one must spend $C(X;Y)$ bits per symbol.

\begin{figure}
\psfrag{l1}[][][0.8]{$X^n$}
\psfrag{l2}[][][0.8]{$F_n$}
\psfrag{l3}[][][0.8]{$nR_1$ bits}
\psfrag{l4}[][][0.8]{$G_n$}
\psfrag{l5}[][][0.8]{$Y^n$}
\psfrag{l6}[][][1.0]{Channel Simulator:  $q(y|x)$}
\psfrag{l7}[][][0.8]{$nR_2$ bits}
\centering
\includegraphics[width=3.0in]{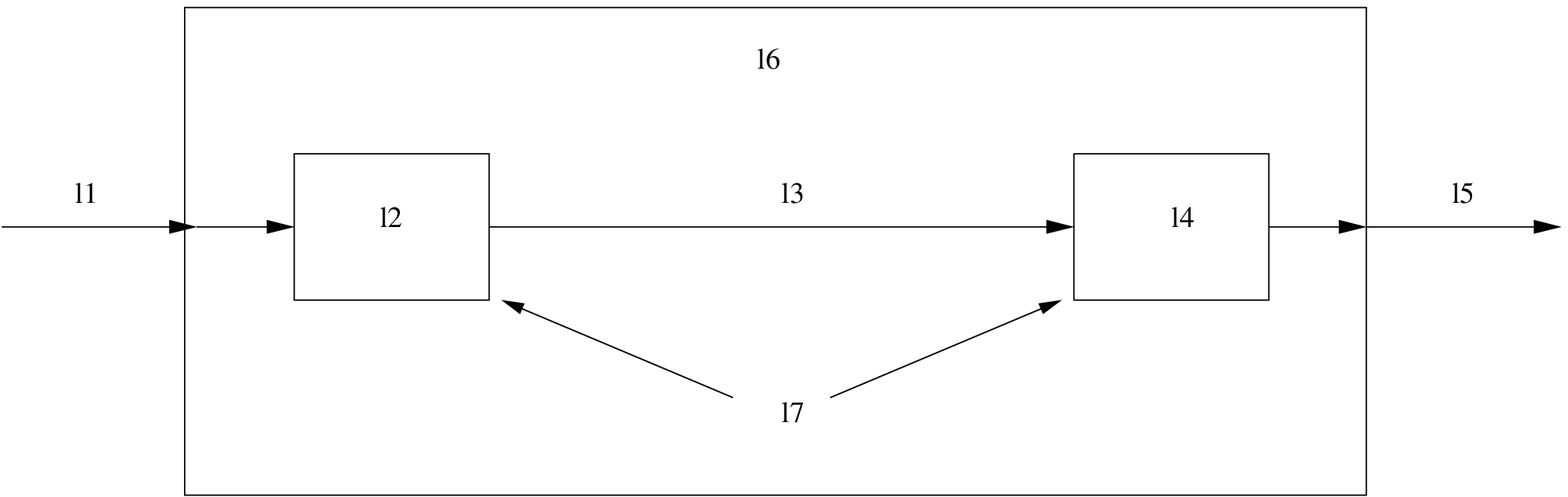}
\caption{A discrete-memoryless channel is simulated by two separate processors, $F$ and $G$.  The first processor, $F$, observes $X$ and common randomness independent of $X$ at rate $R_2$.  The second processor, $G$, generates $Y$ based on the common randomness and a message at rate $R_1$ from $F$.}
\label{figure channel simulation with common randomness}
\end{figure}

This result has implications in cooperative game theory, reminiscent of the framework investigated in \cite{Anantharam}.  Suppose a team shares the same payoff in a repeated game setting.  An opponent tries to anticipate and exploit patterns in the team's combined actions, but a secure line of communication is available to help them coordinate.  Of course, each player could communicate his randomized actions to the other players, but this is an excessive use of communication.  A memoryless channel is a useful way to coordinate their random actions.  Thus, common information is found in Section \ref{section game theory} to be the significant quantity in this situation.

\section{Preliminaries and Problem Definition}
\label{section preliminaries}

\subsection{Notation}
\label{subsection notation}

We represent random variables as capital letters, $X$, and their alphabets are written in script, ${\cal X}$.  Sequences, $X_1,...,X_n$ are indicated with a superscript $X^n$.  Distribution functions, $p_X(x)$, are usually abbreviated as $p(x)$ when there is no confusion.

Accented variables, $\hat{X}$, indicate different variables for each accent, but their alphabets are all the same, ${\cal X}$.  Similarly, distribution functions written with an accent or different letter, such as $p(x)$ versus $\hat{p}(x)$, represent different distributions.

Markov chains, satisfying $p(x,y,z) = p(x,y)p(z|y)$, are represented with dashes, $X-Y-Z$.

\noindent
(Wyner's) common information:
$$C(X;Y) \triangleq \min_{X-U-Y} I(X,Y;U).$$
Conditional common information:
$$C(X;Y|W) \triangleq \min_{X-(U,W)-Y} I(X,Y;U|W).$$
Total variation distance:
$$\left| \left| p - q \right| \right|_1 \triangleq \frac{1}{2} \sum_x |p(x) - q(x)|.$$

\subsection{Problem Specific Definitions}
\label{subsection problem specific definitions}

A source $X^n$ is distributed i.i.d. according to $\check{p}(x)$.  A description of the source at rate $R_1$ is represented by $I \in \{1,...,2^{nR_1}\}$.  A random variable $J$, uniformly distributed on $\{1,...,2^{nR_2}\}$ and independent of $X^n$, represents the common random bits at rate $R_2$ known at both the encoder and decoder.  The decoder generates a channel output $Y^n$ based only on $I$ and $J$.

The channel being simulated has a the conditional distribution $q(y|x)$, thus the {\em desired joint distribution} is $\check{p}(x)q(y|x)$.

\begin{definition}
A $(2^{nR_1}, 2^{nR_2}, n)$ {\em channel simulation code} consists of a randomized encoding function,
$$F_n: {\cal X}^n\times \{1,2,...,2^{nR_2}\} \to \{1,2,...,2^{nR_1}\},$$
and a randomized decoding function,
$$G_n: \{1,2,...,2^{nR_1}\} \times \{1,2,...,2^{nR_2}\} \to {\cal Y}^n.$$
The description $I$ equals $F_n(X^n,J)$, and the channel output $Y^n$ equals $G_n(I,J)$.

Since randomized functions are specified by conditional probability distributions, it is equivalent to say that a $(2^{nR_1}, 2^{nR_2}, n)$ channel simulation code consists of a conditional probability mass function $p(i, y^n|x^n, j)$ with the properties that $p(y^n|i,j,x^n) = p(y^n|i,j)$, $|{\cal I}| = 2^{nR_1}$, and $|{\cal J}| = 2^{nR_2}$.

The {\em induced joint distribution} of a $(2^{nR_1}, 2^{nR_2}, n)$ channel simulation code is the joint distribution on the quadruple $(X^n, Y^n, I, J)$.  In other words, it is the probability mass function,
\begin{eqnarray}
p(x^n, y^n, i, j) & = & p(i, y^n|x^n, j) p(x^n,j), \label{definition induced distribution}
\end{eqnarray}
where $p(x^n,j) = p(j) \prod_{k=1}^n \check{p}(x_k)$ by construction.
\end{definition}

\begin{definition}
A sequence of $(2^{nR_1}, 2^{nR_2}, n)$ channel simulation codes for $n = 1,2,...$ is said to {\em achieve} $q(y|x)$ if the induced joint distributions have marginal distributions $p(x^n,y^n)$ that satisfy
$$\lim_{n \to \infty} \left| \left| p(x^n, y^n) - \prod_{k=1}^n \check{p}(x_k)q(y_k|x_k) \right| \right|_1 = 0.$$
\end{definition}

\begin{definition}
A rate pair $(R_1,R_2)$ is said to be {\em achievable} if there exists a sequence of $(2^{nR_1}, 2^{nR_2}, n)$ channel simulation codes that achieves $q(y|x)$.
\end{definition}

\begin{definition}
The {\em simulation rate region} is the closure of achievable rate pairs $(R_1, R_2)$.
\end{definition}

\section{Main Result}
\label{section main result}

\begin{theorem}
\label{theorem main result}
For an i.i.d. source with distribution $\check{p}(x)$ and a desired memoryless channel with conditional distribution $q(y|x)$, the simulation rate region is the set,
\begin{eqnarray}
S \triangleq \left\{ (R_1,R_2) \in {\cal R}^2: \right. & & \exists p(x,y,u) \in 
D \mbox{ s.t.} \nonumber \\
R_1 & \geq & I(X;U), \nonumber \\
R_1 + R_2 & \geq & \left. I(X,Y;U) \right \}, \label{definition S}
\end{eqnarray}
where
\begin{eqnarray}
D \triangleq \{p(x,y,u): & & (X,Y) \sim \check{p}(x)q(y|x), \nonumber \\
& & X - U - Y \mbox{ form a Markov chain}, \nonumber \\
& & |{\cal U}| \leq |{\cal X}| |{\cal Y}| + 1\}. \label{definition D}
\end{eqnarray}

\end{theorem}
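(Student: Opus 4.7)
The proof splits into achievability and converse. Achievability uses a random codebook scheme driven by the likelihood encoder and two invocations of the soft-covering (channel resolvability) lemma. The converse identifies the single-letter auxiliary $U_k := (I, J, X^{k-1}, Y^{k-1})$, extracts both rate bounds, and closes by time-sharing and a Caratheodory cardinality reduction. Because achievability is defined only in total variation, the converse requires a mild continuity argument to control a residual mutual information that is zero only under the i.i.d.\ target.

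\textbf{Achievability.} Fix $p(x,y,u) \in D$ with $R_1 > I(X;U)$ and $R_1+R_2 > I(X,Y;U)$; the boundary follows by closure. Generate a random codebook $\{U^n(i,j) : i \in \{1,\ldots,2^{nR_1}\},\ j \in \{1,\ldots,2^{nR_2}\}\}$ i.i.d.\ from $p(u)$, revealed to both parties. Let the decoder pass $U^n(I, J)$ through the memoryless channel $p(y \mid u)$ (read off from $p(x,y,u)$), and let the encoder select $I$ via the likelihood rule $\Pr(I=i \mid X^n=x^n, J=j) \propto \prod_{k=1}^n p(x_k \mid U_k(i, j))$. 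Introduce an idealized joint $\tilde P$ in which $(I, J)$ are uniform and independent, $U^n := U^n(I, J)$, and $X^n, Y^n$ are drawn conditionally independent given $U^n$ through $p(x \mid u)^n$ and $p(y \mid u)^n$. Since the likelihood encoder is precisely the Bayesian inverse of $\tilde P$ given $(J, X^n)$, data processing bounds the total variation between $P_{X^n Y^n J}$ and $\tilde P_{X^n Y^n J}$ by the per-$J$ soft-covering error $E_J \|\tilde P(X^n \mid J) - \check p^n\|_1$, which vanishes in expectation over the codebook when $R_1 > I(X; U)$. A second soft-covering bound, applied to the full codebook of size $2^{n(R_1+R_2)}$, shows $\tilde P_{X^n Y^n}$ is close to $\prod_k \check p(x_k) q(y_k \mid x_k)$ in expected total variation when $R_1 + R_2 > I(X, Y; U)$. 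Triangle inequality and derandomization finish this direction.

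\textbf{Converse.} Given an achieving code with induced distribution $p$, set $U_k := (I, J, X^{k-1}, Y^{k-1})$. Because $Y^n$ is a randomized function of $(I, J)$ alone, $p(y_k \mid I, J, X^{k-1}, Y^{k-1}, X_k) = p(y_k \mid I, J, Y^{k-1})$, so $X_k - U_k - Y_k$ is Markov. Using that $Y^{k-1}$ is independent of $X_k$ given $(I, J, X^{k-1})$, a short chain-rule calculation yields
\[
\sum_{k=1}^n I(X_k; U_k) \;=\; I(X^n; I \mid J) \;\leq\; H(I \mid J) \;\leq\; nR_1.
\]
Expanding the sum rate,
\[
\sum_{k=1}^n I(X_k, Y_k; U_k) \;=\; I(X^n, Y^n; I, J) \;+\; \sum_{k=1}^n I(X_k, Y_k; X^{k-1}, Y^{k-1}),
\]
with the first term bounded by $H(I, J) \leq n(R_1 + R_2)$. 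The second term equals $\sum_k H_p(X_k, Y_k) - H_p(X^n, Y^n)$, and since $p$ is $o(1)$-close in total variation to the i.i.d.\ target on a finite alphabet, Fano-type entropy continuity (on both the per-symbol and product alphabets) forces this to be $o(n)$. Introducing a time-sharing index $T$ uniform on $\{1,\ldots,n\}$ independent of everything else and setting $X := X_T$, $Y := Y_T$, $U := (U_T, T)$ single-letterizes both bounds modulo $o(1)$. A standard Caratheodory reduction preserves the $(X,Y)$ marginal and both mutual informations while shrinking $|\mathcal{U}|$ to at most $|\mathcal{X}||\mathcal{Y}| + 1$; passing to a convergent subsequence delivers a limit $p^* \in D$ satisfying both rate bounds.

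\textbf{Main obstacle.} The trickiest step is controlling $\sum_k I_p(X_k, Y_k; X^{k-1}, Y^{k-1})$ in the converse. It vanishes under $\prod_k \check p(x_k) q(y_k \mid x_k)$, but the Fano-style entropy continuity bound on a product alphabet picks up a factor of order $n \log(|\mathcal{X}||\mathcal{Y}|)$; what saves us is that the product with the $o(1)$ total variation is still $o(n)$. Choosing the auxiliary to include both $X^{k-1}$ and $Y^{k-1}$ is also crucial: dropping $Y^{k-1}$ leaves the sum-rate bound misaligned with $H(I, J)$, while dropping $X^{k-1}$ loosens the $R_1$ bound.
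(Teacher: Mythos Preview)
Your proof is correct. Achievability is essentially identical to the paper's: both build an idealized joint $Q$ (your $\tilde P$) from a random $U^n$-codebook indexed by $(i,j)$, invoke soft covering twice---once at rate $R_1+R_2$ against the $(X,Y)$ product law and once at rate $R_1$ (per bin $j$) against the $X$ product law---and then read off the code as $\hat p(i\mid x^n,j)=Q(i\mid x^n,j)$ and $\hat p(y^n\mid i,j)=Q(y^n\mid i,j)$; your ``likelihood encoder'' is exactly the paper's $\hat p(i\mid x^n,j)$.

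The converse differs in the choice of auxiliary. You take $U_k=(I,J,X^{k-1},Y^{k-1})$ and then time-share; the paper takes the smaller $U=(I,J,K)$ with $K$ a uniform time index. Your choice yields the clean identity $\sum_k I(X_k;U_k)=I(X^n;I,J)$ with no slack on the $R_1$ side, and pushes the entire approximation error into the residual $\sum_k I(X_k,Y_k;X^{k-1},Y^{k-1})=\sum_k H(X_k,Y_k)-H(X^n,Y^n)$ on the sum-rate side, which you correctly bound by $o(n)$ via entropy continuity. The paper invokes the same continuity bound (its inequality~(\ref{equation entropy bound})) one step earlier, writing $I(X^n,Y^n;I,J)\ge\sum_k I(X_k,Y_k;I,J)-ng(\epsilon)$ directly, and then uses the companion bound $I(X_K,Y_K;K)\le g(\epsilon)$ (its~(\ref{equation information bound})) to absorb the time index; you implicitly need this same bound when you fold $T$ into $U$. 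Both routes rest on the same two $o(n)$ continuity facts and close with the same Caratheodory reduction and limiting argument; the paper's auxiliary is leaner, while yours makes the Markov structure and the $R_1$ telescoping more explicit. One small overstatement: your remark that including both $X^{k-1}$ and $Y^{k-1}$ in $U_k$ is ``crucial'' is not quite right---the paper shows that including neither already suffices.
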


\section{Observations and Examples}
\label{section examples}

Two extreme points of the simulation rate region $S$ fall directly from its definition.  If $R_2 = 0$, the second inequality in (\ref{definition S}) dominates.  Thus, the minimum rate $R_1$ is the common information $C(X;Y)$.  This coincides with the intuition provided by Wyner's result in \cite{Wyner}.  At the other extreme, using the data processing inequality on the first inequality of (\ref{definition S}) yields $R_1 \geq I(X;Y)$ no matter how much common randomness is available, and this is achieved when $R_2 \geq H(Y|X)$.\footnote{$R_2$ doesn't necessary have to be as large as $H(Y|X)$ for $(I(X;Y), R_2)$ to be in the simulation rate region.}  Source coding results and the coordinated action work of Cover and Permuter in \cite{Permuter} illustrate that with a description rate of $I(X;Y)$ we can create a codebook of output sequences in such a way that we'll likely be able to find a jointly typical output sequence for each input sequence from the source.  Consequently, we can then randomize the codebook using common randomness to actually simulate the channel, as Bennett and Shor proved in \cite{Bennett}.

\subsection{Binary Erasure Channel}
\label{subsection erasure channel}

For a Bernoulli-half source $X$, let us demonstrate the simulation rate region for the binary erasure channel.  $Y$ is an erasure with probability $P_e$ and is equal to $X$ otherwise.  The distributions in $D$ that produce the boundary of the simulation rate region are formed by cascading two binary erasure channels as shown in Figure \ref{figure erasure test channel}, where
\begin{eqnarray}
p_2 & \in & \left[ 0, \min \left\{ \frac{1}{2}, P_e \right\} \right], \nonumber \\
p_1 & = & 1 - \frac{1-P_e}{1-p_2}. \nonumber
\end{eqnarray}
The mutual information terms in (\ref{definition S}) become
\begin{eqnarray}
I(X;U) & = & 1 - p_1, \nonumber \\
I(X,Y;U) & = & h(P_e) + (1-p_1) (1 - h(p_2)), \nonumber
\end{eqnarray}
where $h$ is the binary entropy function.

Figure \ref{figure BEC 075 tradeoff} shows the boundary of the simulation rate region for erasure probability $P_e = 0.75$.  The required description rate $R_1$ varies from $C(X;Y) = h(0.75) = 0.811$ bits to $I(X;Y) = 0.25$ bits as the rate of common randomness runs between $0$ and $H(Y|X) = h(0.75) = 0.811$ bits.

\begin{figure}
\psfrag{l1}[][][1.0]{$X$}
\psfrag{l2}[][][1.0]{$U$}
\psfrag{l3}[][][1.0]{$Y$}
\psfrag{l4}[][][0.8]{$0$}
\psfrag{l5}[][][0.8]{$1$}
\psfrag{l6}[][][0.8]{$p_1$}
\psfrag{l7}[][][0.8]{$p_1$}
\psfrag{l8}[][][0.8]{$p_2$}
\psfrag{l9}[][][0.8]{$p_2$}
\psfrag{l10}[][][0.8]{$0$}
\psfrag{l11}[][][0.8]{$e$}
\psfrag{l12}[][][0.8]{$1$}
\centering
\includegraphics[width=2.5in]{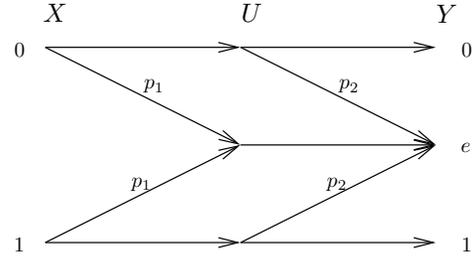}
\caption{The Markov chains $X-U-Y$ that give the boundary of the simulation rate region for the binary erasure channel with a Bernoulli-half input are formed by cascading two erasure channels.}
\label{figure erasure test channel}
\end{figure}

\begin{figure}
\psfrag{BEC}[][][0.9]{BEC Simulation Rate Region, $P_e = 0.75$}
\psfrag{l1}[][][0.8]{$I(X;Y)$}
\psfrag{l2}[][][0.8]{$C(X;Y)$}
\centering
\includegraphics[width=3in]{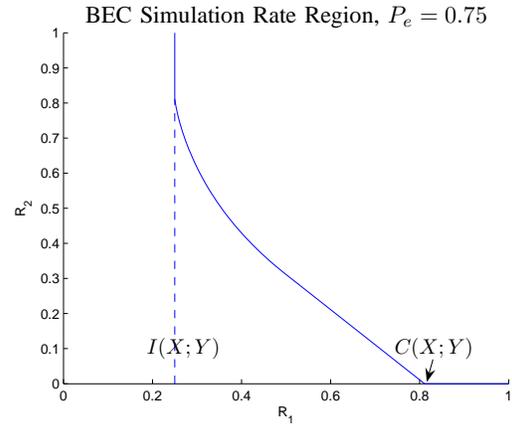}
\caption{Boundary of the simulation rate region for a binary erasure channel with erasure probability $P_e = 0.75$ and a Bernoulli-half input, where $R_1$ is the description rate and $R_2$ is the rate of common randomness.  Without common randomness, a description rate of $C(X;Y)$ is required to simulate the channel.  With unlimited common randomness, a description rate of $I(X;Y)$ suffices.}
\label{figure BEC 075 tradeoff}
\end{figure}

\section{Sketch of Converse}
\label{section converse}

Let $(R_1, R_2)$ be an achievable rate pair.  Then for each $\epsilon \in (0,1/4)$ there exists a $(2^{nR_1}, 2^{nR_2}, n)$ channel simulation code with an induced joint distribution $p(x^n,y^n,i,j)$ such that
$$\left| \left| p(x^n, y^n) - \prod_{k=1}^n \check{p}(x_k)q(y_k|x_k) \right| \right|_1 < \epsilon.$$
Let the random variable $K$ be uniformly distributed over the set $\{1,...,n\}$.  The variable $K$ will serve as a random time index.

\subsection{Entropy Bounds}
\label{subsection entropy bounds}

The joint distribution of the sequences $(X^n,Y^n)$ is close in total variation to an i.i.d. distribution, so we can extend Lemma 2.7 of \cite{Csiszar} to obtain two bounds:
\begin{eqnarray}
\left| H(X^n,Y^n) - \sum_{k=1}^n H(X_k,Y_k) \right| & \leq & n g(\epsilon), \label{equation entropy bound} \\
I(X_K,Y_K;K) & \leq & n g(\epsilon), \label{equation information bound}
\end{eqnarray}
where
\begin{equation}
\label{definition g epsilon}
g(\epsilon) \triangleq 4\epsilon \left( \log|{\cal X}| + \log|{\cal Y}| + \log \frac{1}{\epsilon} \right).
\end{equation}
Notice that $\lim_{\epsilon \downarrow 0} g(\epsilon) = 0$.

\subsection{Epsilon Rate Region}
\label{subsection epsilon rate region}

Define an epsilon rate region,
\begin{eqnarray}
S_{\epsilon} \triangleq \left\{ (R_1,R_2) \in {\cal R}^2 \right. & : & \exists p(x,y,u) \in 
D_{\epsilon} \mbox{ s.t.} \nonumber \\
R_1 & \geq & I(X;U) - 2g(\epsilon), \nonumber \\
R_1 + R_2 & \geq & \left. I(X,Y;U) - 2g(\epsilon) \right \}, \nonumber
\end{eqnarray}
where
\begin{eqnarray}
D_{\epsilon} \triangleq \{p(x,y,u) & : & ||p(x,y) - \check{p}(x)q(y|x)||_1 < \epsilon, \nonumber \\
& & X - U - Y \mbox{ form a Markov chain}, \nonumber \\
& & |{\cal U}| \leq |{\cal X}| |{\cal Y}| + 1\}. \label{definition D epsilon}
\end{eqnarray}

\begin{lemma}
\label{lemma epsilon rate region}
$$(R_1, R_2) \in S_{\epsilon}.$$
\end{lemma}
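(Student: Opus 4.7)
The plan is to build a single-letter auxiliary $U$ from the code, verify it lies in $D_\epsilon$, and establish the two rate inequalities with single-letter slack $2g(\epsilon)$. I would take $U_k \triangleq (I, J, X^{k-1})$, use the uniform time index $K$ already introduced, and set $U \triangleq (U_K, K)$, $X \triangleq X_K$, $Y \triangleq Y_K$.

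Membership in $D_\epsilon$ is the easy part. The Markov chain $X - U - Y$ holds exactly: since $Y^n = G_n(I, J)$ uses private randomness independent of $X^n$, we have $p(y^n | i, j, x^n) = p(y^n | i, j)$, which descends to $Y_K$ being conditionally independent of $X_K$ given $U$. The $(X, Y)$-marginal of the induced distribution is $\frac{1}{n}\sum_{k=1}^n p(x_k, y_k)$; the hypothesis $\|p(x^n, y^n) - \prod_k \check{p}(x_k) q(y_k|x_k)\|_1 < \epsilon$ forces each single-letter marginal to be within $\epsilon$ of $\check{p}(x) q(y|x)$, so their convex combination is as well by the triangle inequality. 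The cardinality bound $|\mathcal{U}| \leq |\mathcal{X}| |\mathcal{Y}| + 1$ follows from the standard support-lemma reduction that preserves $p(x, y)$, $I(X; U)$, and $I(X, Y; U)$.

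For the description rate I would chain $nR_1 \geq H(I | J) \geq I(X^n; I | J) = \sum_{k=1}^n I(X_k; I | J, X^{k-1}) = \sum_{k=1}^n I(X_k; U_k)$, the last equality using independence of $X_k$ from $(J, X^{k-1})$. Reinterpreting the sum through $K$ gives $R_1 \geq I(X_K; U_K | K)$; since $X_K \sim \check{p}$ regardless of $K$, $I(X_K; K) = 0$, and the chain rule collapses this to $R_1 \geq I(X; U)$, actually stronger than the claimed $R_1 \geq I(X; U) - 2g(\epsilon)$. The sum-rate bound is where the approximate i.i.d.\ structure actually costs something. I would start from $n(R_1 + R_2) \geq H(I, J) \geq I(X^n, Y^n; I, J) = H(X^n, Y^n) - H(X^n, Y^n | I, J)$, invoke (\ref{equation entropy bound}) to replace $H(X^n, Y^n)$ by $\sum_k H(X_k, Y_k)$ at a cost of $n g(\epsilon)$, and drop $Y^{k-1}$ from the conditional chain rule to get $H(X^n, Y^n | I, J) \leq \sum_k H(X_k, Y_k | U_k)$. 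This yields $n(R_1 + R_2) \geq n I(X_K, Y_K; U_K | K) - n g(\epsilon)$, after which (\ref{equation information bound}) lets me trade $I(X_K, Y_K; U_K | K)$ for $I(X_K, Y_K; U_K, K) = I(X, Y; U)$ at a second cost of $g(\epsilon)$ per symbol.

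The main obstacle I expect is precisely this careful two-step use of the closeness-to-i.i.d.\ entropy estimates: passing from $n$-letter to single-letter mutual informations without breaking the exact Markov structure coming from $Y^n = G_n(I, J)$. In particular, one must drop $Y^{k-1}$ (rather than $X^{k-1}$) when single-letterizing $H(X^n, Y^n | I, J)$, so that the conditioning auxiliary remains $U_k = (I, J, X^{k-1})$ and the clean Markov chain $X_k - U_k - Y_k$ survives into each term of the sum.
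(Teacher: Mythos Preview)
Your argument is correct and follows the paper's approach closely: single-letterize via the random time index $K$, pay the $g(\epsilon)$ penalties from (\ref{equation entropy bound}) and (\ref{equation information bound}) on the sum-rate side, verify the Markov chain from the code structure $p(y^n|i,j,x^n)=p(y^n|i,j)$, and reduce cardinality via Carath\'eodory. The one substantive difference is the choice of auxiliary. The paper uses the smaller $U=(I,J,K)$, dropping \emph{both} past sequences when it bounds $H(X^n,Y^n\mid I,J)\leq\sum_k H(X_k,Y_k\mid I,J)$, whereas you retain $X^{k-1}$ and take $U=(I,J,X^{K-1},K)$. Your choice buys the exact chain-rule identity $I(X^n;I\mid J)=\sum_k I(X_k;U_k)$ and hence $R_1\geq I(X;U)$ with no slack at all; the paper's simpler auxiliary still suffices for membership in $S_\epsilon$ and yields the Markov chain $X_K-(I,J,K)-Y_K$ directly without needing to argue that only $Y^{k-1}$ may be dropped. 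After the support-lemma reduction the two choices are indistinguishable, so this is a minor variation rather than a different route.
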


\begin{proof}
We use familiar information theoretic inequalities, and the fact that $X^n$ and $J$ are independent, to bound $R_1$ and the sum rate $R_1 + R_2$.
\begin{eqnarray}
nR_1 & \geq & H(I) \nonumber \\
& \geq & H(I|J) \nonumber \\
& \geq & I(X^n;I|J) \nonumber \\
& = & I(X^n;I,J). \label{equation Ixij}\\
n(R_1 + R_2) & \geq & H(I,J) \nonumber \\
& \geq & I(X^n,Y^n;I,J). \label{equation Ixyij}
\end{eqnarray}
We then lower bound the r.h.s. of (\ref{equation Ixij}) and (\ref{equation Ixyij}) using similar steps.  Here we proceed from (\ref{equation Ixyij}).
\begin{eqnarray*}
I(X^n;Y^n;I,J) & = & H(X^n,Y^n) - H(X^n,Y^n|I,J) \\
& \geq & H(X^n,Y^n) - \sum_{k=1}^n H(X_k,Y_k|I,J) \\
& \geq & \sum_{k=1}^n I(X_k,Y_k;I,J) - ng(\epsilon) \\
& = & nI(X_K,Y_K;I,J|K) - ng(\epsilon) \\
& \geq & nI(X_K,Y_K;I,J,K) - 2n g(\epsilon).
\end{eqnarray*}
The second inequality comes from (\ref{equation entropy bound}), and the last inequality comes from (\ref{equation information bound}).

The joint distribution of the pair $(X_K,Y_K)$ can be shown to satisfy the total variation constraint in (\ref{definition D epsilon}).  Finally, we acknowledge the Markovity of the triple $X_K - (I,J,K) - Y_K$  to complete the proof of the lemma.  (The cardinality bound of $U$ in (\ref{definition D epsilon}) is shown to be satisfiable via a generalized Caratheodory theorem.)
\end{proof}

\subsection{Lower semi-continuity}
\label{subsection lower semi continuity}
The epsilon rate regions decrease to the simulation rate region as epsilon decreases to zero.

\begin{lemma}
\label{lemma lower semi continuity}
$$\bigcap_{\epsilon \in (0,1/2)} S_\epsilon \subset S.$$
\end{lemma}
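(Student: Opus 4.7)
The plan is to show that any point common to all $S_\epsilon$ can be realized by a single distribution in $D$, obtained as a limit of witness distributions $p_\epsilon(x,y,u) \in D_\epsilon$. Fix $(R_1, R_2) \in \bigcap_\epsilon S_\epsilon$ and pick a sequence $\epsilon_n \downarrow 0$. For each $n$, membership $(R_1, R_2) \in S_{\epsilon_n}$ yields a joint distribution $p_n(x,y,u) \in D_{\epsilon_n}$ satisfying the perturbed rate inequalities. Crucially, the cardinality bound $|\mathcal{U}| \leq |\mathcal{X}||\mathcal{Y}|+1$ is uniform in $n$, so all $p_n$ live in a fixed compact probability simplex over $\mathcal{X}\times\mathcal{Y}\times\mathcal{U}$.

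Next, I would extract a convergent subsequence $p_{n_k} \to p^*$ using Bolzano--Weierstrass, and verify that $p^* \in D$. The marginal constraint follows from continuity of total variation: since $\|p_{n_k}(x,y) - \check p(x)q(y|x)\|_1 < \epsilon_{n_k} \to 0$, the limit $p^*(x,y)$ equals $\check p(x)q(y|x)$ exactly. The Markov condition $X - U - Y$ is a closed condition on the simplex (it is the vanishing of the polynomial identities $p(x,y,u)p(u) - p(x,u)p(y,u) = 0$), so it is preserved in the limit. The cardinality bound is automatic since $\mathcal{U}$ is held fixed.

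Finally, I would pass the rate inequalities to the limit. On finite alphabets, the mutual information functionals $p \mapsto I(X;U)$ and $p \mapsto I(X,Y;U)$ are continuous, so $I(X_{n_k}; U_{n_k}) \to I(X^*; U^*)$ and $I(X_{n_k}, Y_{n_k}; U_{n_k}) \to I(X^*, Y^*; U^*)$. Combined with $g(\epsilon_{n_k}) \to 0$ (noted after the definition of $g$), the inequalities $R_1 \geq I(X_{n_k};U_{n_k}) - 2g(\epsilon_{n_k})$ and $R_1 + R_2 \geq I(X_{n_k},Y_{n_k};U_{n_k}) - 2g(\epsilon_{n_k})$ yield $R_1 \geq I(X^*;U^*)$ and $R_1 + R_2 \geq I(X^*,Y^*;U^*)$, which is precisely $(R_1, R_2) \in S$.

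I expect the main subtlety to be purely bookkeeping: ensuring the alphabet $\mathcal{U}$ can be taken the same for every $n$ so the sequence $p_n$ lies in a single compact space. This is handled by the uniform cardinality bound already built into $D_\epsilon$. After that, every step reduces to standard continuity and closedness facts for finite-alphabet distributions, so no further obstacle is anticipated.
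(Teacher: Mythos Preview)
Your argument is correct and is exactly the standard compactness/continuity route one would expect here; the paper itself gives no proof of this lemma (it is only stated), so there is nothing to compare against beyond noting that your approach is the natural one the author presumably had in mind. The one point worth tightening in a final write-up is the bookkeeping you already flag: once $|\mathcal U|\le |\mathcal X||\mathcal Y|+1$ fixes a common finite alphabet, compactness of the simplex, closedness of the Markov constraint, continuity of mutual information, and $g(\epsilon)\to 0$ do all the work.
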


\section{Sketch of Achievability}
\label{section achievability}

\subsection{Resolvability}
\label{subsection resolvability}

One key tool for the achievability proof is summarized in Lemma \ref{lemma quantized input}.  This lemma is implied by the resolvability work of Han and Verd\'{u} in \cite{Han}, but the concept was first introduced by Wyner in Theorem 6.3 of \cite{Wyner}.

\begin{lemma}
\label{lemma quantized input}
For any discrete distribution $p(u,v)$ and each $n$, let ${\cal C}^{(n)} = \{U^n(m)\}_{m=1}^{2^{nR}}$ be a ``codebook'' of sequences each independently drawn according to $\prod_{k=1}^n p_U(u_k)$.

For a fixed codebook, define the distribution
$$Q(v^n) = 2^{-nR} \sum_{m=1}^{2^{nR}} \prod_{k=1}^n p_{V|U}(v_k|U_k(m)).$$

Then if $R > I(V;U)$,
$$\lim_{n \to \infty} {\mathbb E} \left| \left| Q(v^n) - \prod_{k=1}^n p_V(v_k) \right| \right|_1 = 0,$$
where the expectation is with respect to the randomly constructed codebooks ${\cal C}^{(n)}$.
\end{lemma}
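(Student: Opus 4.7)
The plan is to view $Q(v^n)$ as the empirical average, over codeword index $m$, of the i.i.d.\ random variables $W_m(v^n) \triangleq \prod_{k=1}^n p_{V|U}(v_k|U_k(m))$ (with respect to the random codebook), and to show that this average concentrates on its codebook-independent mean. The i.i.d.\ codebook construction immediately gives $\mathbb{E}[Q(v^n)] = \mathbb{E}[W_1(v^n)] = \prod_k p_V(v_k) \triangleq P(v^n)$, so $Q$ is unbiased for the target product distribution $P$. The remaining task is to show that $\mathbb{E}\|Q-P\|_1 = \frac{1}{2}\sum_{v^n}\mathbb{E}|Q(v^n) - P(v^n)|$ vanishes whenever $R > I(V;U)$.

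I would split the $v^n$-sum into the strongly typical set $T_\delta^{(n)}(V)$ and its complement. The atypical piece is bounded by $\sum_{v^n \notin T_\delta}\mathbb{E}[Q(v^n) + P(v^n)] = 2\Pr\{V^n \notin T_\delta\}$, which vanishes by the weak law of large numbers. On the typical piece, apply Jensen's inequality to pass to the standard deviation, $\mathbb{E}|Q(v^n) - P(v^n)| \leq \sqrt{\mathrm{Var}(Q(v^n))}$, and exploit independence across $m$ to obtain $\mathrm{Var}(Q(v^n)) \leq 2^{-nR}\mathbb{E}[W_1(v^n)^2]$. I would bound $\mathbb{E}[W_1(v^n)^2] = \sum_{u^n}p_U(u^n)\,p_{V|U}(v^n|u^n)^2$ by splitting on joint typicality of $(u^n, v^n)$: jointly typical $u^n$ satisfy $p_{V|U}(v^n|u^n) \leq 2^{-n(H(V|U)-\delta')}$, so that part is at most $2^{-n(H(V|U)-\delta')}P(v^n)$; jointly atypical $u^n$ contribute a term that is $o(P(v^n))$ by a conditional weak-law argument that uses the typicality of $v^n$.

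Assembling these bounds gives $\sqrt{\mathrm{Var}(Q(v^n))} \lesssim 2^{-n(R + H(V|U) - O(\delta))/2}\sqrt{P(v^n)}$ on the typical set. Summing over $v^n \in T_\delta^{(n)}(V)$ and applying Cauchy--Schwarz with $|T_\delta^{(n)}(V)| \leq 2^{n(H(V)+\delta)}$ yields $\sum_{v^n \in T_\delta}\sqrt{P(v^n)} \leq 2^{n(H(V)+\delta)/2}$, so the typical contribution is bounded by $2^{-n(R - I(V;U) - O(\delta))/2}$, which decays to zero whenever $R > I(V;U)$ and $\delta$ is taken small enough. The step I expect to require the most care is the control of the jointly atypical second-moment contribution, where the only pointwise estimate available is the crude $p_{V|U}(v^n|u^n) \leq 1$; one has to lean on the fact that, conditional on $V^n = v^n \in T_\delta^{(n)}(V)$, the probability that $(U^n, v^n)$ is not jointly typical vanishes uniformly in the typical $v^n$, so that the crude bound is multiplied by a vanishing conditional mass.
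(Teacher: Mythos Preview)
The paper does not prove this lemma at all: it is stated as a known tool, with a pointer to Han--Verd\'u's resolvability result and to Wyner's Theorem~6.3. So there is no ``paper proof'' to compare against; what you have written is a self-contained argument where the paper offers only a citation.

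Your second-moment strategy is the right idea and matches standard soft-covering proofs, but the step you flag as delicate is in fact a genuine gap as written. After you bound the jointly-atypical part of $\mathbb{E}[W_1(v^n)^2]$ by $\epsilon_n\,P(v^n)$ with $\epsilon_n\to 0$, the resulting contribution to $\sum_{v^n\in T_\delta}\sqrt{\mathrm{Var}\,Q(v^n)}$ is of order $2^{n(H(V)-R)/2}\sqrt{\epsilon_n}$, because the Cauchy--Schwarz step blows the sum up by $2^{nH(V)/2}$. Since we only assume $R>I(V;U)$, the prefactor $2^{n(H(V)-R)/2}$ can grow at rate arbitrarily close to $H(V|U)/2$, so a merely ``vanishing'' $\epsilon_n$ is not enough: you would need the conditional-atypicality probability to decay exponentially at a rate exceeding $H(V)-R$. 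But that rate is governed by the joint-typicality slack $\delta'$, and making $\delta'$ large enough to get the needed exponent conflicts with the requirement $R>I(V;U)+O(\delta')$ coming from your typical-part bound. For $R$ only slightly above $I(V;U)$ the two constraints cannot be satisfied simultaneously.

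The standard repair is to replace the joint-typicality split inside the second moment by a threshold on the information density: set $B=\{(u^n,v^n):p_{V|U}(v^n|u^n)>2^{n(I(V;U)+\eta)}P(v^n)\}$. On $B^c$ you get $p_{V|U}(v^n|u^n)\le 2^{n(I(V;U)+\eta)}P(v^n)$ directly, which after the variance bound and the $v^n$-sum yields exactly $2^{-n(R-I(V;U)-\eta)/2}$ with no competing exponent. The $B$-part is then handled \emph{before} passing to variances, at the level of total variation (its expected $\ell_1$ mass is at most $\Pr\{(U^n,V^n)\in B\}$, which vanishes by the weak law for i.i.d.\ information densities). This is essentially how Han--Verd\'u and later treatments close the argument; your outline becomes correct once this modification is made.
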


\subsection{Existence of Achievable Codes}
\label{subsection existence of achievable codes}

Assume that $(R_1,R_2)$ is in the interior of $S$.  Then there exists a distribution $p^{\ast}(x,y,u) \in D$ such that $R_1 > I(X;U)$ and $R_1 + R_2 > I(X,Y;U)$.

For each $n$, let $(I,J)$ be uniformly distributed on $\{1,...,2{nR_1}\} \times \{1,...,2^{nR_2}\}$.  We apply Lemma \ref{lemma quantized input} twice, once with $V=(X,Y)$ and again with $V=X$, to assert that there exists a sequence of ``codebooks'' ${\cal C}^{(n)} = \{U^n(i,j)\}_{(i,j) \in {\cal I}\times{\cal J}}$, $n=1,2,...$ with the properties
\begin{eqnarray}
\lim_{n \to \infty} \left| \left| Q(x^n,y^n) - \prod_{k=1}^n p_{X,Y}^{\ast}(x_k,y_k) \right| \right|_1 & = & 0, \;\;\;\label{equation close joint distribution}\\
\lim_{n \to \infty} \left| \left| Q(x^n,j) - p(j) \prod_{k=1}^n p_{X}^{\ast}(x_k) \right| \right|_1 & = & 0, \;\;\; \label{equation close to independent}
\end{eqnarray}
where $Q(x^n,y^n)$ and $Q(x^n,j)$ are marginal distributions derived from the joint distribution
\begin{eqnarray*}
Q(x^n,y^n,i,j) & = & p(i,j) \prod_{k=1}^n p_{X,Y|U}^{\ast} (x_k,y_k|U_k(i,j)).
\end{eqnarray*}

In an indirect way, we've constructed a sequence of joint distributions $Q(x^n,y^n,i,j)$ from which we can derive channel simulation codes that achieve $q(y|x)$.  The Markovity of $p^{\ast}$ implies the Markov property $Q(x^n,y^n|i,j) = Q(x^n|i,j)Q(y^n|i,j)$.  Let
\begin{eqnarray*}
\hat{p}(i|x^n,j) & = & Q(i|x^n,j), \\
\hat{p}(y^n|i,j) & = & Q(y^n|i,j).
\end{eqnarray*}
Considering (\ref{equation close joint distribution}) and (\ref{equation close to independent}) with the properties of total variation and $p^{\ast}$ in mind, it can be shown that $\hat{p}(i,y^n|x^n,j) = \hat{p}(i|x^n,j)\hat{p}(y^n|i,j)$ is a sequence of channel simulation codes that achieves $q(y|x)$.

\subsection{Comment on Achievability Scheme}
\label{subsection achievability comment}

This channel simulation scheme requires randomization at both the encoder and decoder.  In essence, a codebook of independently drawn $U^n$ sequences is overpopulated so that the encoder can choose one randomly from many that are jointly typical with $X^n$.  The decoder then randomly generates $Y^n$ conditioned on $U^n$.

\section{Game Theory}
\label{section game theory}

Our framework finds motivation in a game theoretic setting.  Consider a zero-sum repeated game between two teams.  Team A consists of two players who on the $i$th iteration take actions $X_i \in {\cal X}$ and $Y_i \in {\cal Y}$.  The opponents on team B take combined action $Z_i \in {\cal Z}$.  All action spaces ${\cal X, Y}$, and ${\cal Z}$ are finite.  The payoff for team A at each iteration is a time-invariant finite function $\Pi(X_i, Y_i, Z_i)$ and is the loss for team B.  Each team wishes to maximize its time-averaged expected payoff.

Assume that team A plays conservatively, attempting to maximize the expected payoff for the worst-case actions of team B.  Then the payoff at the $i$th iteration is
\begin{eqnarray}
\label{loss function}
\Theta_i & \triangleq & \min_{z \in {\cal Z}} {\mathbb E} \left[ \Pi(X_i,Y_i,z)|X^{i-1},Y^{i-1} \right].
\end{eqnarray}
Clearly, (\ref{loss function}) could be maximized by finding an optimal mixed strategy $p^{\ast}(x,y)$ that maximizes $\min_{z \in {\cal Z}} {\mathbb E} \left[ \Pi(X,Y,z) \right]$ and choosing independent actions each iteration.  This would correspond to the minimax strategy.  However, now we introduce a new constraint:  The players on team A have a limited secure channel of communication.  Player 1, who chooses the actions $X^n$, communicates at rate $R$ to Player 2, who chooses $Y^n$.

Let $U$ be the message passed from Player 1 to Player 2.  We say a rate $R$ is achievable for payoff $\Theta$ if there exists a sequence of random variable triples $(X^n,Y^n,U)$ that each form Markov chains \footnote{This Markov chain requirement can be relaxed to the more physically relevant requirement that $X_k - (U, X^{k-1}, Y^{k-1}) - Y_k$ for all $k$.} $X^n - U - Y^n$ and such that
$|{\cal U}| \leq 2^{nR}$ and
\begin{eqnarray}
\label{equation game achievability}
\lim_{n \to \infty} {\mathbb E} \left[ \frac{1}{n} \sum_{i=1}^n \Theta_i \right] & > & \Theta.
\end{eqnarray}

Let $R(\Theta)$ be the infimum of achievable rates for payoff $\Theta$.  We claim that $R(\Theta)$ is the least average common information of all combinations of strategies that achieve average payoff $\Theta$.  Define,
\begin{eqnarray*}
R_0(\Theta) & \triangleq & \min C(X;Y|W) \\
& & \mbox{s.t. } {\mathbb E} \left[ \min_{z \in {\cal Z}} {\mathbb E} \left[ \Pi(X,Y,z)|W \right] \right] \geq \Theta.
\end{eqnarray*}

\begin{theorem}
\label{theorem game theory}
$$R(\Theta) = R_0(\Theta).$$
\end{theorem}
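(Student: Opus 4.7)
The plan is to prove both inequalities $R(\Theta) \leq R_0(\Theta)$ and $R(\Theta) \geq R_0(\Theta)$ separately, exploiting the decomposition $C(X;Y|W) = \sum_w p(w)\, C_w(X;Y)$, where $C_w(X;Y)$ denotes the Wyner common information computed from the joint distribution $p(x,y|w)$. This identity holds because, conditioned on $W=w$, the Markov constraint $X-(U,W)-Y$ reduces to $X-U-Y$ and $I(X,Y;U|W) = \sum_w p(w)\, I(X,Y;U|W=w)$, whose per-$w$ minimizers over $p(u|x,y,w)$ may be chosen independently.

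For the converse, take any sequence of triples $(X^n,Y^n,U)$ with $|{\cal U}| \leq 2^{nR}$, the relaxed Markov chain $X_k - (U, X^{k-1}, Y^{k-1}) - Y_k$ for every $k$ (invoking the footnote), and $\lim_{n\to\infty} {\mathbb E}[\frac{1}{n}\sum_k \Theta_k] > \Theta$. Standard chain-rule inequalities give $nR \geq H(U) \geq I(X^n,Y^n;U) = \sum_{k=1}^n I(X_k,Y_k;U|X^{k-1},Y^{k-1})$. For every realized history $(x^{k-1},y^{k-1})$, the relaxed Markov chain collapses to $X_k - U - Y_k$ relative to the conditional law, hence $I(X_k,Y_k;U|X^{k-1},Y^{k-1}) \geq C(X_k;Y_k|X^{k-1},Y^{k-1})$ by definition of the common information of that conditional law. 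Introducing $K$ uniform on $\{1,\ldots,n\}$ and $W = (K, X^{K-1}, Y^{K-1})$, the decomposition above identifies $\frac{1}{n}\sum_k C(X_k;Y_k|X^{k-1},Y^{k-1}) = C(X_K;Y_K|W)$, so $R \geq C(X_K;Y_K|W)$. On the payoff side, ${\mathbb E}[\frac{1}{n}\sum_k \Theta_k] = {\mathbb E}[\min_z {\mathbb E}[\Pi(X_K,Y_K,z)|W]]$, which exceeds $\Theta$ for all sufficiently large $n$, making $(X_K,Y_K,W)$ feasible in the definition of $R_0(\Theta)$ and yielding $R \geq R_0(\Theta)$.

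For achievability, fix $R > R_0(\Theta)$ and a distribution $p^*(w,x,y)$ with intrinsic payoff ${\mathbb E}[\min_z {\mathbb E}[\Pi|W]]$ strictly exceeding $\Theta$ and with $C(X;Y|W) < R$ (obtained by a small continuity perturbation of any minimizer of $R_0(\Theta)$). Employ deterministic block time-sharing: partition the $n$ rounds into sub-blocks of lengths $\approx n\, p^*(w)$, scheduled so that both players know the current label $w$. Within each sub-block, Players 1 and 2 apply Wyner's construction (the $R_2 = 0$ corner of Theorem \ref{theorem main result}) to simulate the i.i.d.\ pair $(X,Y) \sim p^*(x,y|w)$: Player 1 picks an index uniformly from a shared codebook of $U^n$ sequences (where $U$ is the auxiliary attaining $C_w(X;Y)$), generates her own actions via $p^*(x|u)$, sends the index to Player 2 at rate $C_w(X;Y)$, and Player 2 generates his actions via $p^*(y|u)$. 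Aggregating across sub-blocks produces total rate $\sum_w p^*(w) C_w(X;Y) = C(X;Y|W) < R$ and preserves the Markov chain $X^n - U - Y^n$.

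The main obstacle is to convert the total-variation guarantee of Wyner's simulation into a guarantee on the game-theoretic average payoff, since $\Theta_i$ is defined through the opponent's worst-case response to the realized-history conditional distribution of $(X_i,Y_i)$. Because $\Pi$ is bounded on the finite alphabet ${\cal X} \times {\cal Y} \times {\cal Z}$, the functional $p(x,y) \mapsto \min_z \sum_{x,y} p(x,y) \Pi(x,y,z)$ is Lipschitz in total variation; Wyner's scheme ensures that within each sub-block the joint distribution of actions is $o(1)$-close in total variation to $\prod p^*(x,y|w)$, and this closeness transfers through the Lipschitz map to show that ${\mathbb E}[\frac{1}{n}\sum_i \Theta_i]$ is within $o(1)$ of $\sum_w p^*(w) \min_z \sum_{x,y} p^*(x,y|w) \Pi(x,y,z) > \Theta$. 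Making this step rigorous (and verifying the requisite cardinality bound on $U$ via a Caratheodory-type argument if needed) is the primary technical work; the high-level information-theoretic structure then follows directly from the additivity of conditional common information together with the extremal $R_2 = 0$ case of Theorem \ref{theorem main result}.
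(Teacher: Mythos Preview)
Your proposal is correct and follows essentially the same approach as the paper's sketch: the converse single-letterizes via a uniform time index $K$ with $W=(K,X^{K-1},Y^{K-1})$, and achievability treats $W$ as a time-sharing variable over which Wyner's common-information construction is applied block by block. The only cosmetic difference is that the paper keeps the original message $U$ directly as the auxiliary in the bound $R \geq I(X_K,Y_K;U|W) \geq C(X_K;Y_K|W)$, whereas you first lower-bound each summand $I(X_k,Y_k;U|X^{k-1},Y^{k-1})$ by the per-history common information and then invoke the additivity identity $C(X;Y|W)=\sum_w p(w)\,C_w(X;Y)$; the two routes are equivalent.
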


{\em Converse Sketch:} \\
The important elements of the converse are the inequalities
\begin{eqnarray*}
n(R(\Theta) + \epsilon) & > & H(U) \\
& \geq & I(X^n,Y^n;U) \\
& = & \sum_{i=1}^n I(X_i,Y_i;U|X^{i-1},Y^{i-1}) \\
& = & nI(X_K,Y_K;U|X^{K-1},Y^{K-1},K),
\end{eqnarray*}
for all $\epsilon > 0$, where $K$ is uniformly distributed on $\{1,...,n\}$.  Now identify the tuple $(X^{K-1},Y^{K-1},K)$ as the auxiliary random variable $W$.

{\em Achievability Comment:} \\
The random variable $W$ serves as a time sharing variable to combine strategies of high and low correlation.

\section{Acknowledgment}

The author would like to thank his advisor, Tom Cover, for encouraging the study of coordination via communication, and Young-Han Kim for his suggestions and encouragement.  This work is supported by the National Science Foundation through grants CCF-0515303 and CCF-0635318.

\end{document}